\newtheorem{Th}{Theorem}[section]
\newtheorem{prop}[Th]{Proposition}
\newtheorem{lemme}[Th]{Lemma}
\newtheorem{cor}[Th]{Corollary}
\newtheorem{hyp}[Th]{Hypothesis}
\newtheorem{rem}[Th]{Remark}
\newenvironment{dem}{\textbf{Proof}\,:\quad}{\hfill$\square$\bigbreak}
\numberwithin{equation}{section}
\def\aaa{{\mathcal A}}\def\ccc{{\mathcal C}}\def\ddd{{\mathcal D}}
 \def\hhh{{\mathcal H}}
\def\mmm{{\mathcal M}} 
\def\sss{{\mathcal S}}
\def\R{\mathbb R}\def\C{\mathbb C}\def\N{\mathbb N}
\def\D{\partial}\def\eps{\varepsilon}
\def\norm#1{\left\Vert#1\right\Vert}
\def\id{\mathrm{Id}} \def\re{\mathrm{Re}\,} 
\def\Hess{\mathrm{Hess}\,}
\title{Small eigenvalues of the low temperature linear relaxation Boltzmann equation with a confining potential}
\author{Virgile Robbe\thanks{Laboratoire de math\'ematiques Jean Leray, Universit\'e de Nantes, 
2, rue de la Houssini\`ere - BP 92208 F-44322 Nantes Cedex 3. This work is supported by the ANR project NOSEVOL, ANR 2011 BS01019 01}}
\begin{document}

\maketitle

\begin{abstract}
We study the linear relaxation Boltzmann equation, a simple semiclassical kinetic model. We provide 
a resolvent estimate for an associated non-selfadjoint operator as well as an estimate on the return to equilibrium. This is done using a scaling argument and 
non-semiclassical hypocoercive estimate.

\end{abstract}

\noindent\textbf{Keywords and phrases:} Boltzmann equation, hypocoercivity, quasimode, non-selfadjoint operator.
 
\noindent\textbf{Mathematics Subject Classification 2010:} 35P20, 35Q20, 47A10, 47D06.

\section{Introduction}

The spectral study of inhomogeneous kinetic equations and the trend to the equilibrium of the related system of particles
are  natural subjects of interest and some progress has been made in the past decade in the spirit of the 
so-called 'hypocoercivity'. In this article we are interested in the study at low temperature of a simple (from the kinetic 
point of view) but difficult (from the spectral analysis point of view) linear model where collisions between particles are not
of diffusion type, but of (non-local) relaxation type. 
 This system as already been studied by Hérau 
in \cite{He06} with improvements by Dolbeault {\it et al.} in \cite{DoMoSc09,DoMoSc10}, but at fixed temperature.\\
The final purpose is to study the existence  of metastable states and a possible 
tunnel effect for the system, which implies very long relaxation time to the equilibrium.
Here we provide first spectral results for the low-lying eigenvalues and the return to the equilibrium at low temperature
 for the following simple linear relaxation Boltzmann model:

$$\left\{
          \begin{array}{l}
           \D_tf+v.\D_xf-\frac1m\D_xV.\D_vf=Q(f)\\
f_{|t=0}=f_0\\
          \end{array}\right.
,$$
where the unknown $f(t,x,v)$ is the density of probability of the system of particles at time $t\in\R_+$, position $x\in\R^d$
and velocity $v\in\R^d$. We will assume that, for all $t\geq0$, $f(t,\,\cdot\,,\,\cdot\,)$ belongs to $L^2$.  Here the collision kernel $Q$ models interactions between particles in the gas and is given
by $$Q(f)=\gamma \left[\left(\int_{\R_v^d} f(t,x,v)\,\mathrm dv\right)\mathrm m_{\beta}-f\right],$$
where $$\mathrm m_{\beta}(v)=\displaystyle{\frac {\mathrm e^{-\frac {m \beta v^2} {2}}}{(\frac{2\pi}{m\beta})^{d/2}}}$$ 
is the $L^1(\R^d_v)$-normalized Maxwellian in the velocity direction with $\beta=1/kT$ where $k$ is the Boltzmann constant, $T$ the temperature of 
the system, and $\gamma$ is the friction coefficient.
 The potential $V\in\ccc^\infty(\R^d,\R)$ only depends on the position $x$.
 We are interested in the low temperature regime of the system and develop for this a semiclassical framework.
We put $h=kT=1/\beta$, we set $m=1$, $\gamma=1$, and pose
$$\mu_h=\mathrm m_{1/h}=\frac1{(2\pi h)^{d/2}}\mathrm e^{-\frac{v^2}{2h}}.$$
We also introduce the spatial Maxwellian and the global Maxwellian given by 
$$\rho_{h}(x)=\displaystyle{\frac {\mathrm e^{-\frac {V(x)}{h}}} {\displaystyle{\int_{\R_x^d}\mathrm e^{-\frac {V(x)}{h}}}}}.  
\qquad  \mmm_h(x,v)=\rho_{h}(x)\mu_{h}(v).$$
It is immediate to check that $\mmm_h$  is the only distributional steady state of the system up to renormalization (see \cite{HelNi05}).
This equation describes a system of large number of particles submitted to an external force coming from the potential $V$
and interacting according to the collision kernel $Q$, whose effect is a simple relaxation toward the local Maxwellian
$\left(\int_{\R_{v'}^d} f(t,x,v')\,\mathrm dv'\right)\mu_{h}(v)$.

We multiply our equation by $h$ which becomes

$$\left\{
          \begin{array}{l}
           h\D_tf+v.h\D_xf-\D_xV.h\D_vf=hQ(f),\\
f_{|t=0}=f_0.\\
          \end{array}\right.
$$
The semiclassical limit $h\rightarrow0^+$ corresponds to the low temperature 
regime of the system.
\begin{hyp}\label{hypo}
 The potential $V$ is a Morse function with $n_0$ local minima
and with derivatives of order 2 or more bounded.
Moreover $\mathrm e^{-\frac V h}\in L^1$ for all $h\in[0,h_0[$ and
there exists $C>0$ such that $|\nabla V(x)|\geq \displaystyle{\frac1C}$ for $|x|>C$. 
\end{hyp}

A standard Hilbert space for the study of the time independent equation is the weighted space
$$\hhh=\left\{f\in\ddd'\;|\;\frac f {\mmm_h^{1/2}}\in L^2\right\}$$
and we see from the Cauchy-Schwarz inequality an Hypothesis 1.1 that $\hhh$ is a subset of $L^1(\R_x^d\times\R_v^d,
\mathrm dx\mathrm dv).$
This is then more convenient to work with the rescaled function $$u(t,x,v)= \mmm_h^{-1/2}f (t,x,v)\in\ccc(\R_+,L^2)$$
(the continuity will be a consequence of an associated semigroup property)
and the new equation reads
\[\left\{
          \begin{array}{l}
           h\D_tu+v.h\D_xu-\D_xV.h\D_vu+ h(\id-\Pi_h)u=0\\
u_{|t=0}=u_0,\\
          \end{array}\right.
\]
where $\Pi_h$ is the orthogonal projection in $\hhh$ (with $t$ as a parameter) on the space 
$E_h=\left\{\rho\mu_{h}^{1/2},\;\rho\in L^2(\R^d_x)\right\}$
of local equilibria (note that $E_h$ is closed). In deed, we have 
$$Q(\mmm_h^{1/2})=\mmm^{1/2}(u-\langle u,\mu_h\rangle_{L^2(\R^d_v)}\mu_h)$$
and
$$\rho_h(x)^{-1/2}v.\partial_x\rho_h^{1/2}(x)-\mu_h(v)^{-1/2}\partial_xV.\partial_v\mu_h^{1/2}(v)=0$$
 The time independent operator is now
        $$\begin{array}{lll}P_h&=&v.h\D_x-\D_xV.h\D_v+ h(\id-\Pi_h)\\
                             &=&X_0^h+ h(\id-\Pi_h),
          \end{array}$$
and the aim of this paper is to prove the following theorem:
\begin{Th}\label{main} Suppose that $V$ satisfies hypothesis \ref{hyp}. Then $P_h$ has $0$ as simple eigenvalue and there 
exists $ h_0>0,$ and $\delta>0$ such that
\begin{enumerate}[i)]
           
           \item for all $h\in]0,h_0]$, $\mathrm{Spec}\,P_h\cap B(0,\delta h)$ consists of exactly $n_0$ (counted for multiplicity) real  eigenvalues which are exponentially small
with respect to $\frac1h$,
           \item for all $\delta_1\in]0,\delta],$ there exists $C>0$ such that, for all $h\in]0,h_0]$, if $\delta_1h\leq |z|\leq\delta h$ then
 $$\|(P_h-z)^{-1}\|\leq \frac C h.$$         
          \end{enumerate}
\end{Th}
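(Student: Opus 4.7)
I would first establish the resolvent estimate (ii) on the annulus $\delta_1 h\leq|z|\leq\delta h$, and then deduce (i) by constructing $n_0$ quasimodes and using a contour-integral representation of the spectral projector. The central analytic tool, as announced in the abstract, is a non-semiclassical hypocoercive estimate in the spirit of Hérau / Dolbeault--Mouhot--Schmeiser, transported to the semiclassical regime by a scaling argument. Throughout I work in the rescaled picture $u=\mmm_h^{-1/2}f$, so that $X_0^h$ is skew-adjoint on $L^2$ and $\id-\Pi_h$ is a self-adjoint orthogonal projection onto $E_h$.

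\textbf{Quasimode construction.} A direct computation gives
\[
X_0^h\bigl(\rho(x)\mu_h^{1/2}(v)\bigr)
=v\cdot\bigl(h\partial_x\rho+\tfrac12\rho\,\partial_xV\bigr)\mu_h^{1/2},
\]
which vanishes exactly when $\rho\propto e^{-V/(2h)}\propto\rho_h^{1/2}$; in particular $\mmm_h^{1/2}\in\ker P_h$. Enumerating the local minima $x_1,\dots,x_{n_0}$ of $V$, I choose smooth cut-offs $\chi_j$ equal to $1$ near $x_j$ and supported in an open set containing $x_j$ but no other minimum, and set $e_j=c_j\,\chi_j(x)\,\mmm_h^{1/2}$, normalised in $L^2$. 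Since $e_j\in E_h$ we have $(\id-\Pi_h)e_j=0$, so
\[
P_he_j=c_j\,v\cdot h(\partial_x\chi_j)(x)\,\mmm_h^{1/2}
\]
is supported where $V(x)>V(x_j)+\eta_j$ for some $\eta_j>0$. Laplace's method then yields $\|P_he_j\|=\ooo(e^{-\eta_j/h})$ and near-orthonormality $\langle e_j,e_k\rangle=\delta_{jk}+\ooo(e^{-S/h})$.

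\textbf{Resolvent bound and spectral count.} For (ii) I would adapt the Hérau / Dolbeault--Mouhot--Schmeiser twisted-norm method to the semiclassical setting. At $h=1$ those authors construct a bounded modifier $A=\id+\eps B$, built from $\Pi_1$, $X_0^1$ and their commutators, such that $\RE\langle(X_0^1+\id-\Pi_1)u,Au\rangle\gtrsim\|u\|^2$ modulo the kernel. Reimplementing the construction with appropriately $h$-weighted commutators, and handling the region $|x|$ large through the coercivity $|\nabla V|\geq 1/C$ provided by Hypothesis \ref{hypo}, I would obtain a uniformly bounded modifier $A_h$ satisfying
\[
\RE\langle P_hu,A_hu\rangle\gtrsim h\,\|u\|^2 \qquad\text{for } u\perp\mathrm{Span}(e_1,\dots,e_{n_0}),
\]
from which (ii) on the announced annulus follows. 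The spectral projector $\Pi_0=(2\pi i)^{-1}\oint_{|z|=\delta h}(z-P_h)^{-1}dz$ then satisfies $\|\Pi_0 e_j-e_j\|=\ooo(\|P_he_j\|/h)=\ooo(e^{-S/h})$, so its rank is at least $n_0$; the coercivity on the orthogonal complement forces rank at most $n_0$, giving equality. The matrix of $P_h|_{\mathrm{Range}\,\Pi_0}$ in the basis $(\Pi_0 e_j)$ has entries $\ooo(e^{-S/h})$, whence the exponential smallness of its eigenvalues. Reality is obtained from the unitary symmetry $S\colon u(x,v)\mapsto u(x,-v)$, which intertwines $P_h$ with $P_h^*$ and fixes each $e_j$: the matrix $(\langle P_he_j,e_k\rangle)_{jk}$ is then Hermitian, and combined with the reality of $P_h$ under complex conjugation this pins the small eigenvalues to the real axis.

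\textbf{Main obstacle.} The heart of the proof is the $h$-uniform hypocoercive estimate: the Hérau modifier must be rescaled so that $A_h$ remains bounded on $L^2$ while delivering a coercivity constant of the exact order $h$, rather than some higher power. A naive substitution $v=\sqrt h\,w$ turns $X_0^h$ into a non-semiclassical transport field but does not scale the potential force $\partial_xV\cdot h\partial_v$ consistently, so a careful phase-space decomposition with $h$-dependent cut-offs---separating a compact neighbourhood of the critical set of $V$, where its Morse structure controls the Hessian, from the region at infinity where Hypothesis \ref{hypo} ensures the coercivity of $|\nabla V|$---seems unavoidable, and is also what pins down the admissible value of $\delta>0$.
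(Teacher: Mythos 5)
Your architecture (quasimodes supported near the wells, a hypocoercive estimate on their orthogonal complement, a contour integral for the spectral projector, and the velocity-reversal symmetry for reality) matches the paper's. But the central analytic step --- the $h$-uniform hypocoercive estimate --- is left as a programme rather than a proof, and the route you sketch for it is not the one that works. You reject the scaling because ``a naive substitution $v=\sqrt h\,w$ \dots does not scale the potential force consistently''; the paper's point is that the \emph{isotropic} scaling $S_h\colon u\mapsto h^{-d/2}u(\cdot/\sqrt h,\cdot/\sqrt h)$ in both $x$ and $v$ \emph{is} consistent: it gives $S_h^{-1}P_hS_h=hP$ with $P=v\cdot\partial_x-\partial_xV_h\cdot\partial_v+(\id-\Pi_1)$ and $V_h(x)=h^{-1}V(\sqrt h\,x)$, whose derivatives of order $\geq2$ are bounded uniformly in $h$. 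The Hérau construction $L=\Lambda^{-2}a^*b$ then applies with $h$-uniform constants, and no phase-space decomposition with $h$-dependent cut-offs is needed. The genuinely delicate point, which your sketch does not address, is that after this scaling the wells of $V_h$ are at mutual distance $\sim h^{-1/2}$, so the spectral gap of the Witten Laplacian $a^*a$ is exponentially small and the naive Hérau bound degenerates; the paper's fix is to prove (Lemma \ref{gap}) that $a^*a$ is bounded below by the fixed constant $\tau$ on the orthogonal complement of the $n_0$ Witten quasimodes, and to feed this into the term $\eps\,\re(\Lambda^{-2}a^*a\Pi_1u,\Pi_1u)$. Merely declaring the estimate ``for $u\perp\mathrm{Span}(e_1,\dots,e_{n_0})$'' does not explain where that orthogonality enters the commutator computation; this is the missing idea.

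A second, smaller gap is the reality argument. The matrix $(\langle P_he_j,e_k\rangle)_{jk}$ is indeed Hermitian because $U_\kappa$ fixes the $e_j$ and intertwines $P_h$ with $P_h^*$, but this is the matrix of a quadratic form on the quasimode space, not the matrix of the operator $P_h|_H$ in a basis of $H=\mathrm{Range}\,\Pi_0$; the discrepancy $\Pi_0e_j-e_j=\widetilde O(\mathrm e^{-\alpha/h})$ is of the same order as the eigenvalues you are trying to localize, so an ``approximately Hermitian'' argument cannot rule out exponentially small imaginary parts or Jordan blocks. The paper instead works exactly on $H$: it shows the indefinite form $(u,v)_\kappa=(U_\kappa u,v)$ restricts to a uniformly positive definite inner product on $H$, and that $P_h|_H$ is exactly selfadjoint for it; reality and diagonalizability then follow at once. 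Complex conjugation, which you invoke, only gives symmetry of the spectrum about the real axis and is not used in the paper.
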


This work is a first step towards getting semiclassical expansions of the smallest eigenvalues of our operator 
in the spirit of  Helffer {\it et al.} in \cite{HelKlNi04} for the Witten Laplacian and
by Hérau {\it et al.} in \cite{HeSjSt05,HeHiSj1,HeHiSj2,HeHiSj3} 
for the Kramers-Fokker-Planck operator. In those four last papers, the authors obtained  asymptotic expansions for 
low-lying eigenvalues of the operator
using pseudodifferential technics. Here, one of the major problem is that the projection $\Pi_h$ in $P_h$
does not obey good symbol estimates uniformly in $h$. To overcome this problem, we use a scaling $S_h$ defined below in (\ref{S_h}) and $h$-free hypocoercive
estimates in the spirit of \cite{He06,DoMoSc09,DoMoSc10} for the linear relaxation Boltzmann equation or \cite{HeNi04} for the 
Fokker-Planck equation. Since we work in a non-selfadjoint case, some particular tools have to be used. We first need to get resolvent estimates allowing
control of the norm of the spectral projection onto the generalized eigenspace associated with 
all the eigenvalues with modulus smaller than $\delta h$.
We also need the notion of a PT-like symmetry (cf. \cite{HeHiSj3}) which is a powerful tool that we use like in \cite{HeHiSj3} in
 order to show that there are no 
Jordan blocks in the action of the operator on the generalized eigenspaces associated with the low-lying eigenvalues.

The plan of the paper is the following:
in the second section we prove a hypocoercive estimate from which we deduce a resolvent estimate. In the third  section we finish proving the main result using the PT symmetry property.
The last part is dedicated, in the spirit of Boltzmann H-theorem, to getting a convergence result of 
the solution of the linear relaxation Boltzmann equation on the generalized eigenspace associated with the low-lying
eigenvalues.\bigskip

\textbf{Notation:} For any multi-index $\alpha\in\N^d$, we write
$\partial^\alpha=\partial_{x_1}^{\alpha_1}...\partial_{x_d}^{\alpha_d}$ (and similarly for $v$). We also denote 
$u=\widetilde O(\mathrm e^{-\frac\alpha h})$ for $u\in L^2$ when there exist $C,\,N>0$ such that 
$\|u\|\leq h^{-N}\mathrm e^{-\frac\alpha h}$

\section{Hilbertian hypocoercivity }

Hilbertian hypocoercivity (cf. \cite{DoMoSc09,He06,Vi09}) refers to way to get coercive estimates by using a slight modification
of the scalar product or the operator.
We shall first discuss the maximal accretivity of our operator, so that we can apply our spectral results to describe the
properties of the semigroup.
As $P_h$ is the sum of a non-negative selfadjoint operator (an orthogonal projection) and a skewadjoint operator,
so $P_h$ is accretive.
Let equip $P_h$ with the domain $D=\left\{u\in L^2\;|\;X_0^hu\in L^2\right\}$, we then get a maximal accretive operator.
Indeed $X_0^h$ is maximal accretive on $D$ and $\id-\Pi_h$ is a bounded operator on $L^2$. So $P_h=X_0^h+h(\id-\Pi_h)$
is maximal accretive.

As announced in the introduction, the proof of our estimate relies on a scaling argument. If we conjugate $P_h$
with the dilatation operator, 
\begin{equation}\label{S_h}S_h\,:\left\{\,\begin{array}{ccc}L^2(\R_x^d\times\R_v^d)&\rightarrow&L^2(\R_x^d\times\R_v^d)\\
                              u&\mapsto& h^{-d/2}u(\frac . {\sqrt{h}},\frac . {\sqrt{h}})\end{array}\right. ,\end{equation}
then $S_h^{-1}P_hS_h=hP$ where $P$ is the operator

        $$\begin{array}{lll}P&=&v.\D_x-\D_xV_h(x).\D_v+ (\id-\Pi_1)\\
                             &=&X_0+ (\id-\Pi_1),
          \end{array}$$
where $V_h(x)=\frac 1 h V(\sqrt h x)$.
Notice that $P$ depend on $h$ because the potentials $V_h$ depend on $h$. Nevertheless we chose a $h$-less notation, since 
we will have estimates for $P$ that are uniform w.r.t. $h$.
More precisely, these estimates will only depend on the $L^\infty$ norm of second order or higher order derivatives of 
$V_h$.
Using that derivatives of order two 
or more of $V$ are bounded, we have (if $h\leq1$) for $k\geq2$ and $\alpha\in\N^d$ with $|\alpha|=k$, 
$$\|\partial_x^\alpha V_h\|_\infty=h^{(k-2)/2}\|\partial_x^\alpha V\|_\infty\leq\|\partial_x^\alpha V\|_\infty.$$
Thus we get uniformity w.r.t $h$ on the $L^\infty$ norm of second order or higher order derivatives of 
$V_h$.

 We now will follow \cite{He06} to get a hypocoercive estimate on $P$.
We introduce the two following differential operators: 
$$\begin{array}{cc} a_j=(\D_{x_j}+\D_{x_j}V_h/2),&\quad b_j=(\D_{v_j}+v_j/2)
           \end{array},$$
and their formal adjoints
 $$\begin{array}{cc} a_j^*=(-\D_{x_j}+\D_{x_j}V_h/2),&\quad b_j^*=(-\D_{v_j}+v_j/2)
           \end{array}.$$
We put $$\begin{array}{cc} a=\left(\begin{array}{c} a_1\\ \vdots\\a_d\end{array}\right)
                        &\quad b=\left(\begin{array}{c} b_1\\ \vdots\\b_d\end{array}\right) 
                                       \end{array},$$

and $$\Lambda^2=a^*a+b^*b+1.$$
Notice that $a^*a=-\Delta_x+|\D_xV_h(x)|^2/4-\Delta V_h(x)/2$ is nothing but the Witten Laplacian (in position) 
and  $b^*b=-\Delta_v+v^2/4-d/2$ the harmonic oscillator (in velocity).
Under our hypotheses (see \cite{HelNi05}), $\mathscr{S}(\R^{2d})$ is a core for $\Lambda^2$ and $\Lambda^r$ which
is well-defined for all $r\in\R$ and the operators $a$, $b$, and $\Lambda^2$ are continuous on
$\sss$ and $\sss'$. We have the following relation between $a$, $b$, and $P$:
$$P=b^*a-a^*b+(\id-\Pi_1)$$

We also introduce the semiclassical scaling operator acting on functions only depending on $x$ (to be related to (\ref{S_h})), 
that is $$T_h\,:\left\{\,\begin{array}{ccc}L^2(\R_x^d)&\rightarrow&L^2(\R_x^d)\\
                              u&\mapsto& h^{-d/4}u(\frac . {\sqrt{h}})\end{array}\right. .$$
Notice that $W_h=hT_ha^*aT_h^{-1}$ is then the semiclassical Witten Laplacian: $$W_h=-h^2\Delta_x+|\D_xV(x)|^2/4-h\Delta V(x)/2.$$
Under our hypotheses \ref{hypo} on $V$ and from \cite{HelKlNi04},  this Witten Laplacian $W_h$ has $n_0$ exponentially small
real eigenvalues and there exists $0<\tau\leq1$ fixed 
from now on such that the remaining part of the spectrum is included in $[\tau h,+\infty[$ for $h\in]0,1]$.
We recall from \cite{HelKlNi04} that there exist well-chosen  cut-off functions $\chi_j\in\ccc^\infty_0$, $1\leq j\leq n_0$ each localizing to a neighborhood of $j^\mathrm{th}$ minimum of $V$, for which
\begin{equation}\label{quasW}e_j(x)=\chi_j(x)\mathrm e^{-\frac{V(x)}{2h}}\end{equation}
serves as a quasimode for $W_h$ in the sense that, for some $c>0$,
$$W_he_j=O(\mathrm e^{-\frac ch}\norm{e_j}).$$
We defined the associated non-semiclassical quasimodes $$f_j(x)=T_h^{-1}e_j(x), \qquad g_j(x,v)=f_j(x)\mu_{1}^{1/2}(v),$$ (that still depend on $h$)
and introduce the associated vector spaces
\begin{equation*}\begin{array}{c}
F=\mathrm{Vect}\left\{f_j\,,\;j\in\llbracket1...n_0\rrbracket\right\}\subset L^2(\R_x^d),\\
G=\mathrm{Vect}\left\{g_j\,,\;j\in\llbracket1...n_0\rrbracket\right\}\subset L^2(\R_x^d\times\R_v^d).
   \end{array}\end{equation*}

We also put for all $j$
\begin{equation}\label{quas}
g_j^h=S_hg_j
\end{equation} 

We have the following lemma:

\begin{lemme}\label{Quas}
 The family $(g_j^h)_j$ is almost orthogonal and consists of exponentially small quasimodes for $P_h$ (respectively $P_h^*$), meaning that there exists $\alpha>0$ such that 
for all $j$, $k\in\llbracket1...n_0\rrbracket$, $j\neq k$, $$(g_j^h,g_k^h)=O(\mathrm e^{-\frac \alpha h}\|g_j^h\|\|g_k^h\|)$$.
and for all $j\in\llbracket1...n_0\rrbracket$, $$P_hg_j^h=\widetilde O(\mathrm e^{-\frac \alpha h}\|g_j^h\|)$$ (respectively $P_h^*g_j^h=\widetilde O(\mathrm e^{-\frac \alpha h}\|g_j^h\|)$).
\end{lemme}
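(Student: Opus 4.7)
The plan is to push everything through the scaling $S_h$ and the semiclassical scaling $T_h$, so that all estimates reduce to the known quasimode property of the Witten Laplacian $W_h$.

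Since $S_h$ is unitary and $S_h^{-1}P_hS_h=hP$, one has
\[
P_h g_j^h = h\, S_h P g_j, \qquad \|g_j^h\|=\|g_j\|,
\]
and a parallel identity for $P_h^*$ (using that $X_0$ is skew-adjoint, so $P^*=-b^*a+a^*b+(\id-\Pi_1)$). Hence the quasimode estimate for $P_h$ and $P_h^*$ both follow if I can show $\|Pg_j\|$ and $\|P^*g_j\|$ are exponentially small compared to $\|g_j\|$ with at worst an $h^{-N}$ loss.

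The first key observation I would exploit is the product structure $g_j(x,v)=f_j(x)\mu_1^{1/2}(v)$. A direct computation gives $b\mu_1^{1/2}=0$ (since $\mu_1^{1/2}\propto \mathrm e^{-v^2/4}$), so $bg_j=0$, and $g_j\in E_1$ gives $(\id-\Pi_1)g_j=0$. Using $P=b^*a-a^*b+(\id-\Pi_1)$, this collapses to
\[
Pg_j = b^*a\,g_j = \sum_i (a_if_j)(x)\,v_i\mu_1^{1/2}(v),
\]
and, in the same way, $P^*g_j=-b^*ag_j$. Thanks to orthogonality in $v$ and the fact that $\|\mu_1^{1/2}\|_{L^2_v}=1$ and $\|v_i\mu_1^{1/2}\|_{L^2_v}$ is a finite constant, I get $\|Pg_j\|^2=\|P^*g_j\|^2\leq C\langle a^*af_j,f_j\rangle$.

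Now comes the step that converts this into an exponentially small quantity. By the conjugation $T_ha^*aT_h^{-1}=W_h/h$ and unitarity of $T_h$ together with $T_hf_j=e_j$,
\[
\langle a^*af_j,f_j\rangle = \frac{1}{h}\langle W_he_j,e_j\rangle.
\]
The Witten quasimode property $W_he_j=O(\mathrm e^{-c/h}\|e_j\|)$ from \cite{HelKlNi04} then yields $\|Pg_j\|\leq C h^{-1/2}\mathrm e^{-c/(2h)}\|g_j\|$, which gives exactly $P_hg_j^h=\widetilde O(\mathrm e^{-\alpha/h}\|g_j^h\|)$ (say with $\alpha=c/3$), and similarly for $P_h^*$.

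For the almost orthogonality, since $S_h$ and $T_h$ are unitary and $\|\mu_1^{1/2}\|_{L^2_v}=1$,
\[
(g_j^h,g_k^h)=(f_j,f_k)_{L^2_x}=(e_j,e_k)_{L^2_x}=\int \chi_j\chi_k\,\mathrm e^{-V/h}\,dx.
\]
On the support of $\chi_j\chi_k$ the potential $V$ is bounded below by some $V_0>\max(V(m_j),V(m_k))$ (by construction of the cut-offs $\chi_\ell$ localized near distinct minima), while Laplace's method gives $\|e_\ell\|\sim c_\ell h^{d/4}\mathrm e^{-V(m_\ell)/(2h)}$. The ratio is therefore bounded by a constant times $\mathrm e^{-\alpha'/h}$ with $\alpha'=V_0-(V(m_j)+V(m_k))/2>0$, which provides the desired almost orthogonality.

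The only delicate point is tracking the conjugations correctly (in particular that $T_ha^*aT_h^{-1}$ really produces the semiclassical Witten Laplacian $W_h$ divided by $h$, not some other rescaled operator) and checking that the cut-offs $\chi_j$ supplied by \cite{HelKlNi04} satisfy the geometric overlap condition needed for the Laplace estimate. Everything else is bookkeeping.
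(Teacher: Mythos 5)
Your proof is correct, and at its core it rests on the same computation as the paper's: the collision term vanishes because $g_j^h\in E_h$, and the transport term reduces to the (Witten) annihilation operator in $x$ acting on the spatial quasimode. The paper does this directly in the semiclassical variables, finding $P_hg_j^h=X_0^hg_j^h= h\,v\cdot\nabla\chi_j\,\mathrm e^{-V/(2h)}\mathrm e^{-v^2/(4h)}$, and then invokes the support estimates on $\nabla\chi_j$ from the proof of Proposition 6.1 of \cite{HelKlNi04}; the almost orthogonality is likewise quoted from that proposition. You instead work after the scaling $S_h$ and close the estimate by writing $\|Pg_j\|^2=\langle a^*af_j,f_j\rangle=\frac1h\langle W_he_j,e_j\rangle\leq\frac1h\|W_he_j\|\,\|e_j\|$, using only the black-box quasimode property $W_he_j=O(\mathrm e^{-c/h}\|e_j\|)$. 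That variant is slightly more self-contained (it never opens up the construction of the $\chi_j$) at the cost of halving the exponent and introducing an $h^{-1/2}$, both harmless inside $\widetilde O$. The two points you flag as delicate are fine: the conjugation identity $W_h=hT_ha^*aT_h^{-1}$ is exactly the one stated in the paper, and the geometric overlap condition on the cut-offs is precisely the content of Proposition 6.1 of \cite{HelKlNi04}, which the paper cites; you could equally well cite it rather than redo the Laplace estimate.
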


\begin{proof} According to the expression of the quasimodes in (\ref{quasW}), we have
 $$g_j^h(x,v)=e_j(x)\mu_{h}^{1/2}(v)=\chi_j(x)\mathrm e^{-\frac{V(x)} {2h} }\mathrm e^{-\frac{v^2}{4h}}.$$
 We immediately deduce the relation of almost orthogonality of the family $(g_j^h)_j$ from the one of $(e_j)_j$ (see  Proposition (6.1) in \cite{HelKlNi04}).
From the expression of $g_j^h$, we deduce that $P_hg_j^h=X_0^hg_j^h=v.\nabla\chi_j\mathrm e^{-\frac{V(x)} {2h} }\mathrm e^{-\frac{v^2}{4h}}$.
From the estimates on $\chi_j$ (see proof of Proposition (6.1) in \cite{HelKlNi04}), we get that $\|X_0^hg_j^h\|=O(\mathrm e^{-\frac \alpha h})$ which shows the assertion for $P_h$
Since $P_h^*=-X_0^h+ h(1-\Pi_h)$, we get the same result for $P_h^*$.
\end{proof}

The following proposition is  the core of the hilbertian hypocoercivity and expresses a coercivity property of the operator $P$
using a small bounded perturbation involving the fundamental auxiliary operator $$L=\Lambda^{-2}a^*b.$$ 

\begin{prop}\label{hyp}There exist $\eps,\,A,\,h_0>0$  such that for all $h\leq h_0$ and
$u\in\sss\cap G^\perp$
                 $$\re(Pu,(\id+\eps(L+L^*))u)\geq\frac{1} {A} \|u\|^2,$$
where $A$ can be chosen to depend explicitly on the second and third derivatives of $V$ and
$\|\eps L\|\leq1$.
\end{prop}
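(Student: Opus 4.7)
The plan is to follow the H\'erau--Villani Hilbertian hypocoercivity scheme, adapted to the rescaled $h$-free setting of $P$. For $u\in\sss\cap G^\perp$, decompose $u = u_1 + u_2$ with $u_1 = \Pi_1 u\in E_1$ and $u_2 = (\id - \Pi_1)u$. Since $G \subset E_1$ one has $u_1 \in E_1 \cap G^\perp$, equivalently $\rho \in F^\perp$ when $u_1 = \rho\mu_1^{1/2}$. The skew-adjointness of $X_0$ together with the non-negativity of the orthogonal projection $\id - \Pi_1$ immediately give the easy half,
\[
\re(Pu, u) = \|u_2\|^2.
\]
The task is to extract a positive control on $\|u_1\|^2$ from $\eps\re(Pu, (L + L^*)u)$, up to errors absorbed by $\|u_2\|^2$.

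Set $S = L + L^*$. Using $P^* = -X_0 + (\id - \Pi_1)$ and the self-adjointness of $S$ one gets
\[
\re(Pu, Su) = \tfrac12(u, [S, X_0]u) + \re(u_2, Su).
\]
Since $a^*a$ and $b^*b$ commute (disjoint variables) and both are dominated by $\Lambda^2$, one checks that $L$ is bounded uniformly in $h$ (for instance $LL^* = \Lambda^{-2}(b^*b + d)a^*a\Lambda^{-2} \leq (1/4) + d$ via AM--GM on the commuting operators $a^*a$ and $b^*b$), whence $|\re(u_2, Su)| = O(\|u_2\|\|u\|)$. The commutator $[L, X_0] = [\Lambda^{-2}a^*b, b^*a - a^*b]$ is computed using $[a_j, b_k] = [a_j, b_k^*] = 0$, $[b_j, b_k^*] = \delta_{jk}$, $[a_j, a_k^*] = \D_{x_j}\D_{x_k}V_h$ and the identity $[\Lambda^{-2}, A] = -\Lambda^{-2}[\Lambda^2, A]\Lambda^{-2}$. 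The key algebraic fact is that $[a^*b, b^*a] = a^*a$ modulo a bounded term involving $\Hess V_h$ and $b^*b$, and since $a^*a$ commutes with $\Lambda^2$, this yields
\[
[L, X_0] = \Lambda^{-2}a^*a + R,
\]
where $R$ is bounded with norm depending only on $\|V^{(k)}\|_\infty$ for $k = 2, 3$ (third derivatives enter through commutators like $[a, \Hess V_h]$). Taking adjoints yields the same main term for $[L^*, X_0]$.

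The main term is exploited on $u_1$. Since $a^*a$ commutes with $\Lambda^{-2}$, the operator $\Lambda^{-2}a^*a$ is self-adjoint with norm $\leq 1$, so cross terms $(u_1, \Lambda^{-2}a^*au_2)$ are bounded by $\|u_1\|\|u_2\|$. On $u_1 = \rho\mu_1^{1/2}$ one has $bu_1 = 0$, hence $\Lambda^2 u_1 = (a^*a + 1)u_1$ and
\[
(u_1, \Lambda^{-2}a^*a u_1) = (\rho, (W + 1)^{-1}W\rho)_{L^2(\R^d_x)},
\]
where $W = a^*a$ on $L^2(\R^d_x)$ is related to the semiclassical Witten Laplacian of $V$ by $W = h^{-1}T_h^{-1}W_hT_h$. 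By Hypothesis~\ref{hypo} and the results of Helffer--Klein--Nier recalled above, the spectrum of $W$ consists of $n_0$ exponentially small eigenvalues---whose eigenspace is exponentially close to $F$---and a gap to $[\tau, +\infty)$. Since $\rho \perp F$, a standard spectral argument gives $(\rho, W\rho) \geq (\tau/2)\|\rho\|^2$ for $h$ small, hence $(u_1, \Lambda^{-2}a^*a u_1) \geq c\|u_1\|^2$ with $c = \tau/(2(\tau+1))$. Assembling everything,
\[
\re(Pu, (\id + \eps S)u) \geq \|u_2\|^2 + \eps c\|u_1\|^2 - C\eps(\|u_1\| + \|u\|)\|u_2\|,
\]
and fixing $\eps$ small enough so that $\|\eps L\| \leq 1$ and that Young's inequality absorbs the mixed terms yields the estimate with $1/A$ depending only on $\tau$, $\|V''\|_\infty$ and $\|V'''\|_\infty$. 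The main obstacle is the uniform-in-$h$ control of the remainder $R$: every contribution arising from $[\Lambda^{-2}, X_0]$ and from $\Hess V_h$ must be bounded by constants depending only on $\|V^{(k)}\|_\infty$ ($k = 2, 3$), which works thanks to the uniform boundedness of $V_h^{(k)}$ for $k \geq 2$ noted earlier and to the boundedness of the ratios $a\Lambda^{-1}, b\Lambda^{-1}$ and their adjoints.
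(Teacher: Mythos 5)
Your proposal follows the same route as the paper: the same auxiliary operator $L=\Lambda^{-2}a^*b$, the same commutator computation producing the main term $\Lambda^{-2}a^*a$ plus a bounded remainder, and the same use of the Witten Laplacian spectral gap on $F^\perp$ (via the quasimodes $f_j$) to get coercivity on $\Pi_1 u$. The cosmetic differences (keeping the cross term $(u_1,\Lambda^{-2}a^*a u_2)$ and absorbing it by Young, rather than using that $\Pi_1$ commutes exactly with $\Lambda^{-2}a^*a$) are harmless.

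There is, however, one genuine gap at the assembly step. You only assert that the remainder $R$ in $[L,X_0]=\Lambda^{-2}a^*a+R$ is \emph{bounded}, yet your final displayed inequality charges its contribution to the error term $C\eps(\|u_1\|+\|u\|)\|u_2\|$, i.e.\ with a factor $\|u_2\|=\|(\id-\Pi_1)u\|$. Mere boundedness only gives $|\eps(Ru,u)|\leq \eps\|R\|\,\|u\|^2$, and the resulting $-\eps\|R\|\,\|\Pi_1 u\|^2$ competes at the same order in $\eps$ with the good term $+\eps c\|\Pi_1 u\|^2$; since $\|R\|$ is governed by $\|\Hess V\|_\infty$ and $\|V'''\|_\infty$ while $c$ is governed by the gap $\tau$, the constants do not close in general, and no choice of $\eps$ fixes this. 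What saves the argument — and what the paper states explicitly — is the structural fact that every term of the remainder (both the $\Hess V_h$ term and the contribution of $[\Lambda^{-2},X_0]a^*b$) has $b$ acting first, so that $E_1\subset\ker R$ and hence $(Ru,u)=(Ru_2,u)=O(\|u_2\|\,\|u\|)$; this puts the unavoidable $\|u\|^2$ error at order $\eps^2$ after Young, where it can be absorbed. You should state and verify $E_1\subset\ker R$ term by term. (Separately, the identity $LL^*=\Lambda^{-2}(b^*b+d)a^*a\Lambda^{-2}$ is not correct as written because the double sum $\sum_{j,k}b_k^*a_j^*a_kb_j$ does not factor; the boundedness of $L$ is still true but needs the argument of the paper's lemma, e.g.\ $a^*a\leq\Lambda^2$, $b^*b\leq\Lambda^2$ and a Cauchy--Schwarz on the double sum.)
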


\begin{rem} \textup{ In \cite{He06}, the case of one dimensional $G$ is treated. Here $V_h$ satisfy similar hypotheses, but
the spectral gap is dramatically (exponentially) small due to the fact that the minima of $V_h$ are at a distance of order
$\frac1{\sqrt h}$ from one another. Using $G^\perp$ allows us to get a bound in proposition \ref{hyp} which is uniform with respect to $h$ sufficiently small.}
\end{rem}

\begin{dem} We follow partially \cite{He06} and take care of the uniform 
dependence with respect to $h$.
  Let $u\in L^2$ and $\eps>0$. We have 
               $$\begin{array}{l}\re(Pu,(\id+\eps(L+L^*))u)\\
                                 = \re((\id-\Pi_1)u,(\id+\eps(L+L^*))u)
                                            +\mathrm{Re}(X_0u,(\id+\eps(L+L^*))u)\\    
                                 =\|(\id-\Pi_1)u\|^2+\eps\re((\id-\Pi_1)u,(L+L^*)u)
                                              +\eps\re(X_0u,(L+L^*)u)\\
                                =I+II+III,
                 \end{array}$$
where we used that $X_0$ is skewadjoint for the last term.
We first get a rough lower bound for the two first terms with the Cauchy-Schwarz inequality: 
          $$I+II\geq\frac12\|(\id-\Pi_1)u\|^2-\frac12 \eps^2\|(L+L^*)u\|^2\geq\frac12\|(\id-\Pi_1)u\|^2-\eps^2\|L\|^2\|u\|^2.$$
Let now study the last term more carefully:
     $$III=\eps\re(X_0u,(L+L^*)u)=\eps\re([L,X_0]u,u),$$
since $X_0$ is skewadjoint. Using the definition of $L=\Lambda^{-2}a^*b$ and the commutation relations between $a$, $b$, 
$\Lambda^2$, and $X_0$ (cf. \cite{He06}), we get
               $$\begin{array}{lll}[L,X_0]&=&[\Lambda^{-2}a^*b,X_0]\\
                                          &=&[\Lambda^{-2},X_0]a^*b+\Lambda^{-2 }[a^*,X_0]b+\Lambda^{-2}a^*[b,X_0]\\
                                          &=&-\Lambda^{-2}[\Lambda^{2},X_0]\Lambda^{-2}a^*b-\Lambda^{-2}b^*\Hess V_h b+\Lambda^{-2}a^*a
                  .\end{array}$$
We used the algebraic relation $[A,B^{-1}]=-B^{-1}[A,B]B^{-1}$. 
We put $$\aaa=-\Lambda^{-2}[\Lambda^{2},X_0]\Lambda^{-2}a^*b-\Lambda^{-2}b^*\Hess V b.$$
We are now going to prove to two intermediate lemmas.

\begin{lemme}
The operators $\aaa$ and $L$ are bounded on $L^2$ (uniformly in $h$). 
Moreover their norm can be explicitly bounded in terms of the second and third order derivatives of $V$.
\end{lemme}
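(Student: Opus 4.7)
The strategy is to exploit the commutative structure of $a^*a$, $b^*b$, and $\Lambda^2 = a^*a + b^*b + 1$: since the $a$'s and $b$'s act on disjoint variables, these operators---and more generally each bilinear $b_j^*b_k$---mutually commute, so in particular $b_j^*b_k$ commutes with $\Lambda^{-2}$. From $a^*a, b^*b \leq \Lambda^2$ one gets the basic bounds $\|a\Lambda^{-1}\|, \|b\Lambda^{-1}\| \leq 1$, and the dual versions $\|\Lambda^{-1}a^*\|, \|\Lambda^{-1}b^*\|$ are controlled in terms of $d$ and $\|\partial^2 V\|_\infty$ via $[a,a^*] = \Hess V_h$ and $[b,b^*] = I_d$. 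The crucial additional tool is the functional-calculus AM--GM inequality $(a^*a)(b^*b)\Lambda^{-4} \leq \tfrac14$, valid on this commutative algebra.

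For $L = \Lambda^{-2}a^*b$, I would compute $LL^*$ directly. Using $b_j b_k^* = b_k^*b_j + \delta_{jk}$, $aa^* = a^*a + \Delta V_h$, and $[a_*, b_*]=0$, a short algebraic manipulation gives
\[ LL^* = \Lambda^{-2}\bigl[(a^*a)(b^*b) + a^*a + b^*(\Delta V_h\, I - \Hess V_h)b\bigr]\Lambda^{-2}, \]
and each summand is bounded on $L^2$: the first by $\tfrac14$ via AM--GM, the second by $1$ (since $a^*a \leq \Lambda^2$), and the third by a constant multiple of $\|\partial^2 V\|_\infty$ (after commuting $\Lambda^{-2}$ past each $b_j^*b_k$).

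For $\aaa = -\Lambda^{-2}[\Lambda^2, X_0]\Lambda^{-2}a^*b - \Lambda^{-2}b^*\Hess V_h\, b$, I would treat the two summands separately. The piece $\aaa_2 = -\sum_{jk}\Lambda^{-2}H_{jk}b_j^*b_k$ is handled as in the last step above, by pushing $\Lambda^{-2}$ past the commuting $b_j^*b_k$'s to recover $-\sum_{jk}b_j^*b_k\Lambda^{-2}H_{jk}$, bounded by $\|b_j^*b_k\Lambda^{-2}\|\cdot\|\partial^2 V\|_\infty$. For $\aaa_1$, I would first use $X_0 = b^*a - a^*b$ (checked by direct calculation) together with the commutation identities $[X_0, a] = \Hess V_h\, b$, $[X_0, a^*] = \Hess V_h\, b^*$, $[X_0, b] = -a$, $[X_0, b^*] = -a^*$ to obtain
\[ [\Lambda^2, X_0] = a^*b + b^*a - \Hess V_h\, b^*a - a^*\Hess V_h\, b, \]
modulo a lower-order term in $\nabla\Delta V_h$ arising from commuting $\Hess V_h$ past $a^*$. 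Each of the four bilinear terms, when sandwiched between the two $\Lambda^{-2}$'s of $\aaa_1$ and composed with $a^*b$, yields a bounded operator: for instance the $a^*b$ piece contributes $\Lambda^{-2}(a^*b)\Lambda^{-2}(a^*b) = L^2$, and the remaining bilinears are handled analogously using the building blocks above.

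The main obstacle is the bookkeeping: one must verify that each of the finitely many terms in the expansion of $[\Lambda^2, X_0]$ gives a bounded contribution, carefully tracking the third derivatives of $V_h$ that appear when commuting $\Hess V_h$ past $a, a^*$. These third derivatives are uniformly bounded in $h$ since $\partial^\alpha V_h(x) = h^{(|\alpha|-2)/2}\partial^\alpha V(\sqrt h x)$ is bounded by $\|\partial^\alpha V\|_\infty$ for $|\alpha| \geq 2$ and $h \leq 1$, yielding the claimed dependence of the bounds on only the second and third derivatives of $V$.
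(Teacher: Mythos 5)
Your treatment of $\aaa$ is essentially the paper's: the same commutation identities yield $[\Lambda^2,X_0]=-b^*(\Hess V_h-\id)a-a^*(\Hess V_h-\id)b$ (note there is in fact no residual $\nabla\Delta V_h$ term at this stage if you simply refrain from commuting $\Hess V_h$ past $a^*$; third derivatives only enter later, when a matrix coefficient is moved past $a_j$ or $a_j^*$ in the boundedness estimates), and each resulting piece reduces to operators of the form $\Lambda^{-2}c^*M(x)d$ with $c,d\in\{a,b\}$ and $M$ bounded with bounded gradient, controlled via $a^*a,\,b^*b\leq\Lambda^2$ and the commutation relations with $\Lambda^2$. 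That is exactly the paper's reduction, and your accounting of the $h$-uniformity through $\partial^\alpha V_h=h^{(|\alpha|-2)/2}\partial^\alpha V(\sqrt h\,\cdot)$ is correct.

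There are, however, two concrete errors in your treatment of $L$. First, the operators $b_j^*b_k$ do \emph{not} mutually commute: $[b_1^*b_2,b_2^*b_1]=b_1^*b_1-b_2^*b_2$. What is true, and all you actually need, is that each $b_j^*b_k$ commutes with the trace $b^*b=\sum_l b_l^*b_l$ (and with $a^*a$), hence with $\Lambda^{\pm2}$. Second, and more seriously, your identity for $LL^*$ fails for $d\geq2$. Since $L^*=b^*a\,\Lambda^{-2}$, one has $LL^*=\Lambda^{-2}\bigl[\sum_{j,k}a_j^*a_k\,b_k^*b_j+a^*a\bigr]\Lambda^{-2}$; the off-diagonal terms $a_j^*a_k b_k^*b_j$ with $j\neq k$ cannot be absorbed into $(a^*a)(b^*b)$ plus a $b^*(\cdot)b$ correction, and no $\Hess V_h$ or $\Delta V_h$ term should appear at all, since $a$ is never commuted past $a^*$ in this product. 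Your AM--GM step is therefore applied to the wrong operator. The conclusion is still reachable (each $\Lambda^{-2}a_j^*b_k^*\cdot a_k b_j\Lambda^{-2}$ is a product of two bounded factors by the same elementary bounds), but the cleanest repair is the paper's: $L=\Lambda^{-2}a^*Mb$ with $M=\id$ is precisely one of the three building blocks you already must bound for $\aaa$, handled by passing to the adjoint $a^*Mb\,\Lambda^{-2}$ and using $[a_j^*,M_{jk}]=-\partial_jM_{jk}$ together with $\|a_j^*b_k\Lambda^{-2}u\|^2\leq(\|\Hess V_h\|_{L^\infty}+2)\|u\|^2$; no separate $LL^*$ computation is needed.
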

\begin{rem} \textup{In fact these bounds are with respect to the  second and third order derivatives of $V_h$, 
but this bounds are uniform in $h$, because the second and third order derivatives of $V_h$ are uniformly bounded
with respect to $h$.}
\end{rem}
\begin{dem} We recall some ideas of the proof from \cite{He06}. Using that $L=\Lambda^{-2}a^*b$, we compute the commutator 
$$[\Lambda^2,X_0]=-b^*(\Hess V_h-\mathrm{Id})a-a^*(\Hess V_h-\id)b,$$ 
which gives
       $$\begin{array}{ll}\aaa=&\Lambda^{-2}b^*(\Hess V_h-\id)a\Lambda^{-2}a^*b+\Lambda^{-2}a^*(\Hess V_h-\id)b\Lambda^{-2}b^*a\\
                               &-\Lambda^{-2}b^*\Hess V_hb.
          
         \end{array}
$$
We see that it is sufficient to prove that for any real $d\times d$ matrix $M(x)$ which depends only on $x$
and which is bounded with its first derivative bounded, the operators
        $$\Lambda^{-2}b^*M(x)a,\quad\Lambda^{-2}b^*M(x)b,\quad\Lambda^{-2}a^*M(x)b$$
are bounded on $L^2$. We only prove boundedness  for the first operator, since the proofs for the other two operators are similar and easier.
It is sufficient to show boundedness for the adjoint $a^*M(x)b\Lambda^{-2}$.
For $u\in\sss$, we write
           $$\begin{array}{lll}\\\|a^*M(x)b\Lambda^{-2}u\|&\leq&\displaystyle{\sum_{j,k}}\|a_j^*M_{j,k}(x)b_k\Lambda^{-2}u\|\\
              &\leq&\displaystyle{\sum_{j,k}}\|M_{j,k}(x)a_j^*b_k\Lambda^{-2}u\|+\displaystyle{\sum_{j,k}}\|(\D_jM_{j,k}(x))b_k\Lambda^{-2}u\|\\
              &\leq&\left(\|M\|_{L^\infty}+\|\nabla M\|_{L^\infty}\right)\\
                   &&\times   \displaystyle{\sum_{j,k}}(\|a_j^*b_k\Lambda^{-2}u\|+\|b_k\Lambda^{-2}u\|),
             \end{array}$$
where we used that $[a_j^*,M_{j,k}]=-\D_jM_{j,k}$. As $b^*b\leq\Lambda^2$ and $1\leq\Lambda^2$, we easily check that
$\|b_k\Lambda^{-2 }u\|\leq\|u\|$. For the term $\|a_j^*b_k\Lambda^{-2}u\|$, we write
          $$\begin{array}{l}
             \|a_j^*b_k\Lambda^{-2}u\|^2=(a_ja_j^*b_k\Lambda^{-2}u,b_k\Lambda^{-2}u)\\
             \leq (a_j^*a_jb_k\Lambda^{-2}u,b_k\Lambda^{-2}u)+((\D^2_jV_h)b_k\Lambda^{-2}u,b_k\Lambda^{-2}u)\\
             \leq (\Lambda^2b_k\Lambda^{-2}u,b_k\Lambda^{-2}u)+\|\Hess V_h\|_{L^\infty}(b_k\Lambda^{-2}u,b_k\Lambda^{-2}u).
            \end{array}
$$
We now use that $[\Lambda^2,b_k]=- b_k$ and continue to derive our series of inequalities:
$$\begin{array}{l}
   (\Lambda^2b_k\Lambda^{-2}u,b_k\Lambda^{-2}u)+\|\Hess V_h\|_{L^\infty}(b_k\Lambda^{-2}u,b_k\Lambda^{-2}u)\\
   \leq (b_ku,b_k\Lambda^{-2}u)+(\|\Hess V_h\|_{L^\infty}+1)(b_k\Lambda^{-2}u,b_k\Lambda^{-2}u)\\
   \leq(\|\Hess V_h\|_{L^\infty}+2)\|u\|^2,
  \end{array}
$$
because $b_k^*b_k\leq\Lambda^2$ and $1\leq\Lambda^2$. And this ends the proof of the lemma thanks to the previous remark.
\end{dem}

Let us go back to the proof of proposition \ref{hyp}. We have $E_1\subset\ker b$ because $b$ is the annihilation operator in velocity. 
Because $b$ appears on the right of every term in the definition of $\aaa$, we therefore get $$E_1\subset\ker\aaa.$$
So we can write  $\aaa=\aaa(\id-\Pi_1)$ and we have for $u\in G^\perp\cap\sss$
                   $$\begin{array}{lll}III&=&\eps\re(\aaa(\id-\Pi_1)u,u)+\eps\re(\Lambda^{-2}a^*au,u)\\
                                         &\geq&-\frac14\|(\id-\Pi_1)u\|^2-\eps^2\|\aaa\|^2\|u\|^2+\eps\re(\Lambda^{-2}a^*au,u).
                     \end{array}$$
It is clear that $\Lambda^2$, $a^*a$, and $\Pi_1$ commute and we can write
\begin{equation}\label{**}
\begin{array}{lll}\eps\re(\Lambda^{-2}a^*au,u)&=&\eps\re(\Lambda^{-2}a^*a\Pi_1u,u)+\eps\re(\Lambda^{-2}a^*a(\id-\Pi_1)u,u)\\
             &=&\eps\re(\Lambda^{-2}a^*a\Pi_1u,\Pi_1u)\\
             &&+\eps\re(\Lambda^{-2}a^*a(\id-\Pi_1)u,(\id-\Pi_1)u)\\
             &\geq&\eps\re(\Lambda^{-2}a^*a\Pi_1u,\Pi_1u)-\eps\|(\id-\Pi_1)u\|^2,
  \end{array}\end{equation}
where we used that $a^*a\leq\Lambda^2$.
Here we slightly diverge from \cite{He06}: we are going to work with the quasimodes of the Witten Laplacian instead of
the first eigenvector of $\Lambda^2$ and this will give us exponentially small remainder terms in our estimates.
We now need an intermediate lemma.
\begin{lemme}\label{gap}
For all $u\in G^\perp\cap\sss$ then, 
$\re(\Lambda^{-2}a^*a\Pi_1u,\Pi_1u)\geq\frac\tau 4\|\Pi_1u\|^2$.
\end{lemme}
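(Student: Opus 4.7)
The plan is to exploit that $\Lambda^{-2}a^*a$, restricted to $E_1$, is a function of the non-semiclassical Witten Laplacian in $x$ alone, and then use the spectral gap of the latter together with the $G^\perp$-orthogonality to remove its exponentially small eigenvalues.

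First I would reduce the estimate to a one-variable problem on $L^2(\R^d_x)$. Since $b\mu_1^{1/2}=0$, $E_1$ is annihilated by $b^*b$; and as $a,a^*$ act only in $x$ and $b,b^*$ only in $v$, $E_1$ is stable under $\Lambda^2$ with $\Lambda^2|_{E_1}=W+1$, where $W=a^*a$ is the (non-semiclassical) Witten Laplacian in $x$ associated with $V_h$. Writing $\Pi_1 u=\rho(x)\mu_1^{1/2}(v)$ and using that $\mu_1$ is $L^1$-normalized in $v$, I get
\[
(\Lambda^{-2}a^*a\Pi_1 u,\Pi_1 u)=\bigl(W(W+1)^{-1}\rho,\rho\bigr)_{L^2_x},\qquad \|\Pi_1 u\|^2=\|\rho\|^2_{L^2_x};
\]
moreover the fact that $g_j=f_j\mu_1^{1/2}$ gives $u\in G^\perp\iff\rho\in F^\perp$ in $L^2_x$. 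Thus it suffices to prove $(W(W+1)^{-1}\rho,\rho)\geq\tfrac{\tau}{4}\|\rho\|^2$ for all $\rho\in F^\perp$.

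Next I would import the spectral gap. Since $T_h$ is unitary and conjugates $W$ to $h^{-1}W_h$, the spectral data recalled from \cite{HelKlNi04} give that $W$ has exactly $n_0$ eigenvalues of size $\widetilde O(\mathrm e^{-c/h})$, its remaining spectrum lying in $[\tau,+\infty)$. Let $P_E$ denote the orthogonal projection onto the associated $n_0$-dimensional small-eigenvalue subspace $E\subset L^2(\R^d_x)$. Since $\lambda\mapsto \lambda/(\lambda+1)$ is increasing and $\tau\leq 1$, the spectral theorem yields
\[
\bigl(W(W+1)^{-1}\rho,\rho\bigr)\geq \frac{\tau}{\tau+1}\|(\id-P_E)\rho\|^2\geq \frac{\tau}{2}\|(\id-P_E)\rho\|^2,
\]
so the key remaining step is to show $\|(\id-P_E)\rho\|\geq (1-o(1))\|\rho\|$ when $\rho\in F^\perp$; equivalently, that $F$ and $E$ are exponentially close. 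The unitarity of $T_h$ yields $Wf_j=h^{-1}T_h^{-1}W_h e_j=\widetilde O(\mathrm e^{-c/h})\|f_j\|$, and since $W\geq\tau$ on $\mathrm{ran}(\id-P_E)$,
\[
\|(\id-P_E)f_j\|\leq\tau^{-1}\|Wf_j\|=\widetilde O(\mathrm e^{-c/h})\|f_j\|.
\]
Combined with the almost-orthonormality of $(f_j)$ coming from Lemma \ref{Quas}, this yields $\|P_E-P_F\|=\widetilde O(\mathrm e^{-c/h})$. For $\rho\in F^\perp$, $P_F\rho=0$, so $\|P_E\rho\|=\|(P_E-P_F)\rho\|=\widetilde O(\mathrm e^{-c/h})\|\rho\|$, whence $\|(\id-P_E)\rho\|^2\geq (1-\widetilde O(\mathrm e^{-c/h}))\|\rho\|^2$; choosing $h_0$ small enough absorbs this factor into the $\tfrac{\tau}{2}$ to produce the claimed $\tfrac{\tau}{4}\|\rho\|^2$.

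The main difficulty lies in this last comparison: one must carefully track that the unitarity of $T_h$ preserves the exponential smallness of the defect $Wf_j$ (polynomial factors in $h^{-1}$ are harmless in the $\widetilde O$ notation), and that the spectral gap $\tau$ is large enough in comparison with $\widetilde O(\mathrm e^{-c/h})$ to make the projector comparison $\|P_E-P_F\|=\widetilde O(\mathrm e^{-c/h})$ uniform in $h$. Everything else is routine functional calculus on the one-dimensional reduction.
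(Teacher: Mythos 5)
Your proof is correct and follows essentially the same route as the paper: reduce to the $x$-variable Witten Laplacian $a^*a$ acting on $E_1$, invoke the gap $\tau$ and the exponential closeness of $F$ to the span of the $n_0$ small spectral eigenfunctions from \cite{HelKlNi04}, and conclude via $\tau/(1+\tau)\geq\tau/2$ after absorbing the exponentially small errors. The only difference is cosmetic: you run the functional calculus directly on $W(W+1)^{-1}$ through the projector comparison $\|P_E-P_F\|=\widetilde O(\mathrm e^{-c/h})$, whereas the paper first bounds the quadratic form of $a^*a$ on $F^\perp$ and then substitutes $(a^*a+1)^{-1/2}w$; your version is in fact slightly cleaner at that step, since $(a^*a+1)^{-1/2}$ preserves $F^\perp$ only up to exponentially small errors.
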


\begin{dem}
Let $\mathbb P_h$ denote the spectral projection onto the eigenspaces associated with the $n_0$ smallest
eigenvalues of the semiclassical Witten laplacian $W_h=hT_ha^*aT_h^{-1}$(which are the same as the eigenvalues of $ha^*a$).
According to \cite{HelKlNi04} and \cite{HelSj84}, we have for all 
 $w\in\sss(\R_x^d)$
$$(hT_ha^*aT_h^{-1}(1-\mathbb P_h)T_hw,(1-\mathbb P_h)T_hw)\geq \tau h\|(1-\mathbb P_h)T_hw\|^2$$

We then put $\mathbb P=T_h^{-1}\mathbb P_hT_h$  the (orthogonal) projection on the spectral subspace associated with the $n_0$ smallest
eigenvalues of the Witten laplacian $a^*a$, so the previous inequality becomes (since $T_h$ is unitary)
$$(a^*a(1-\mathbb P)w,(1-\mathbb P)w)\geq \tau \|(1-\mathbb P)w\|^2$$
Moreover, we have
   $$\begin{array}{lll}(a^*aw,w)&=&(a^*a[1-\mathbb P +\mathbb P]w,[1-\mathbb P +\mathbb P]w)\\
      &=&(a^*a(1-\mathbb P)w,(1-\mathbb P)w)+(a^*a\mathbb Pw,\mathbb Pw)\\
&&+(a^*a(1-\mathbb P)w,\mathbb Pw)+(a^*a\mathbb Pw,(1-\mathbb P)w)\\
&=&(a^*a(1-\mathbb P)w,(1-\mathbb P)w)+(a^*a\mathbb Pw,\mathbb Pw),
     \end{array}
$$
because the ranges of $1-\mathbb P$ and $\mathbb P$ are stable under $a^*a$.
By definition of $\mathbb P$, we get that there exists $\alpha>0$ such that  for all $j$
$$(a^*a\mathbb Pw,\mathbb Pw)=O(\mathrm e^{-\frac\alpha h})\|w\|^2.$$
Now, if we take $w\in F^\perp$, we have $\|(1-\mathbb P)w\|^2=\|w\|^2+O(\mathrm e^{-\frac\alpha h})\|w\|^2$.\\
We therefore get $$(a^*aw,w)\geq\tau\|w\|^2+O(\mathrm e^{-\frac\alpha h})\|w\|^2.$$
Since $(a^*a+1)^{-1/2}w\in F^\perp$, we get by the max-min principle (since $a^*a$ is selfadjoint)
             $$(a^*a(a^*a+1)^{-1/2}w,(a^*a+1)^{-1/2}w)\geq\frac\tau{1+\tau}\|w\|^2+O(\mathrm e^{-\frac\alpha h})\|w\|^2.$$
Moreover we clearly have $$1\geq\frac \tau{1+\tau}\geq\frac \tau2,$$
because $\tau\leq1$. We get the result by taking
as $w$ the function defined for almost every $v$ by $x\mapsto\Pi_1u(x,v)$ and $h$ small enough, 
since $\Pi_1u\in E_1$ and $\Pi_1u(\,\cdot\,,v)\in F^\perp$.
\end{dem}
\textbf{End of proof of proposition \ref{hyp}:}
We can now put the result of Lemma \ref{gap} in inequality (\ref{**}) and we get
               $$\eps\re(\Lambda^{-2}a^*au,u)\geq\eps\frac\tau{4}\|\Pi_1u\|^2-\eps\|(\id-\Pi_1)u\|^2.$$
We then get
$$III\geq-\frac 14 \|(\id-\Pi_1)u\|^2-\eps^2\|\aaa\|^2\|u\|^2+\eps\frac\tau{4}\|\Pi_1u\|^2-\eps\|(\id-\Pi_1)u\|^2.$$
Bringing together our estimates for $I+II$ and $III$, this yields to
  $$\begin{array}{l}\re(Pu,(\id+\eps(L+L^*))u)\\
     \geq\frac 12 \|(\id-\Pi_1)u\|^2-\eps^2\|L\|^2\|u\|^2\\
     -\frac 14 \|(\id-\Pi_1)u\|^2-\eps^2\|\aaa\|^2\|u\|^2+\eps\frac\tau{4}\|\Pi_1u\|^2-\eps\|(\id-\Pi_1)u\|^2\\
     \geq \frac 18 \|(\id-\Pi_1)u\|^2+\eps\frac\tau{4}\|\Pi_1u\|^2-\eps^2(\|\aaa\|^2+\|L\|^2)\|u\|^2
   . \end{array}$$
by taking $\eps\leq1/8$. Using that $\Pi_1$ is an orthogonal projection and that $\eps\leq1/8$ we get
           $$\re(Pu,(\id+\eps(L+L^*))u)\geq\left(\eps\frac\tau{4}-\eps^2(\|\aaa\|^2+\|L\|^2)\right)\|u\|^2.$$
Taking $\eps/\tau$ small enough, we get for a constant $A$ sufficiently large (uniform in $h$)
$$\re(Pu,(\id+\eps(L+L^*))u)\geq\frac{\tau^2}A\|u\|^2.$$
This ends the proof of the proposition.
\end{dem}
\begin{cor}\label{Res}
 There exists $c>0$ and $h_0>0$ sufficiently small, such that for all $h\in[0,h_0[$, for all $z\in\C$ with $\re z\leq ch$ and $v\in\sss\cap (S_hG)^\perp$
\begin{equation}\label{res} \|(P_h-z)v\|\geq ch\|v\|.\end{equation}
\end{cor}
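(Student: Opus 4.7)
The proof is a scaling argument combined with the hypocoercive estimate of Proposition~\ref{hyp}. The relation $S_h^{-1}P_hS_h = hP$ and the fact that $S_h$ is unitary mean that, for $v\in\sss\cap(S_hG)^\perp$, setting $u=S_h^{-1}v$ produces an element of $\sss\cap G^\perp$ with $\|v\|=\|u\|$ and $(P_h-z)v = S_h(hP-z)u$, hence $\|(P_h-z)v\| = \|(hP-z)u\|$. It therefore suffices to prove the inequality $\|(hP-z)u\|\geq ch\|u\|$ for all $u\in\sss\cap G^\perp$ and all $z$ with $\re z\leq ch$.

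Next I would introduce the bounded self-adjoint operator $B=\id+\eps(L+L^*)$ with the $\eps>0$ given by Proposition~\ref{hyp}. After shrinking $\eps$ further if necessary so that $\eps\|L+L^*\|\leq 1/2$, one obtains the operator inequalities $\frac{1}{2}\id\leq B\leq \frac{3}{2}\id$, so that $B$ defines an equivalent inner product. Multiplying the conclusion of Proposition~\ref{hyp} by $h$ yields $\re(hPu,Bu)\geq \frac{h}{A}\|u\|^2$ on $\sss\cap G^\perp$. Using that $(u,Bu)$ is real since $B=B^*$, one computes $\re((hP-z)u,Bu) = \re(hPu,Bu) - (\re z)(Bu,u) \geq \frac{h}{A}\|u\|^2 - (\re z)(Bu,u)$.

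Now I would split according to the sign of $\re z$. When $\re z\leq 0$, the term $-(\re z)(Bu,u)$ is nonnegative because $B\geq 0$, and one keeps $\re((hP-z)u,Bu)\geq \frac{h}{A}\|u\|^2$. When $0\leq \re z\leq ch$, one bounds $(Bu,u)\leq \frac{3}{2}\|u\|^2$ to obtain $\re((hP-z)u,Bu)\geq h\bigl(\frac{1}{A}-\frac{3c}{2}\bigr)\|u\|^2$. Choosing $c\leq \frac{1}{3A}$ then yields $\re((hP-z)u,Bu)\geq \frac{h}{2A}\|u\|^2$ in both cases. The Cauchy-Schwarz inequality together with $\|Bu\|\leq \frac{3}{2}\|u\|$ gives $\|(hP-z)u\|\geq \frac{h}{3A}\|u\|$, and pulling the estimate back through $S_h$ produces (\ref{res}) with, for instance, $c=\frac{1}{3A}$.

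There is no real obstacle here: the argument is an essentially direct consequence of Proposition~\ref{hyp} once one has noted that $S_h$ is unitary and turns $P_h-z$ into $h(P-z/h)$ up to conjugation. The only mild points to watch are the shrinking of $\eps$ ensuring the positive lower bound on $B$ so that $(\cdot,B\cdot)$ is comparable to the $L^2$ inner product, and the split on the sign of $\re z$, which makes very negative values of $\re z$ harmless precisely because $B$ is nonnegative.
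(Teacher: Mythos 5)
Your proof is correct and follows essentially the same route as the paper: multiply the hypocoercive estimate of Proposition~\ref{hyp} by $h$, use that the modified form $(u,(\id+\eps(L+L^*))u)$ is real and nonnegative (after shrinking $\eps$) to control the $\re z$ term, apply Cauchy--Schwarz with the boundedness of $L$, and transfer the bound through the unitary $S_h$. Your explicit split on the sign of $\re z$ is a slightly cleaner rendering of the same step the paper performs.
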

\begin{dem}
We just proved that for all $u\in\sss\cap G^\perp$
 $$\re(Pu,(\id+\eps(L+L^*))u)\geq\frac{\tau^2}A\|u\|^2.$$
We then get by multiplying the inequality by $h$ and putting $v=S_hu$
$$\re(P_hv,S_h(\id+\eps(L+L^*))u)\geq\frac{\tau^2h}A\|u\|^2.$$
For all $z\in\C$, we can write
$$\re((P_h-z)v,S_h(\id+\eps(L+L^*))u)\geq\frac{\tau^2h}{A}\|u\|^2-\re(z(v,S_h(\id+\eps(L+L^*))u)).$$
First notice that if we choose $\eps<\displaystyle{\frac1{2\|L\|}}$, we then have $(v,S_h(\id+\eps(L+L^*))u)\geq0$.
So by the Cauchy-Schwarz inequality, we get
\begin{eqnarray*}
  \lefteqn{\norm{(P_h-z)v}\norm{S_h(\id+\eps(L+L^*))u}}\\&&\geq\displaystyle{\frac{\tau^2h}{A}}\|u\|^2
-\min(0,\re z)\norm{v}\norm{S_h(\id+\eps(L+L^*))u}.
  \end{eqnarray*}
Since $S_h$ is unitary and $L$ is bounded, we get for all $ v\in\sss\cap(S_hG)^\perp$
\begin{eqnarray*}
  \lefteqn{\norm{(P_h-z)v}(1+2\eps\|L\|)\norm{v}}\\
&&\geq\frac{\tau^2h}{A}\|v\|^2-\min(0,\re z)(1+2\eps\|L\|)\norm{v}^2.
\end{eqnarray*}
Therefore, taking $\re z\leq ch= \displaystyle{\frac{\tau^2h}{2(1+2\eps\|L\|)A}}$ leads to
$$ \|(P_h-z)v\|\geq ch\|v\|.$$

\end{dem}

Since we are working with a non-selfadjoint operator, we have to be careful in order to get a resolvent estimate on the whole space.
 Thanks to lemma \ref{Quas}, we see that $S_hG=\mathrm{Vect}\,g_j^h$ vanishes under $P_h$ up to a $O(\mathrm e^{-\frac \alpha h})$ and that $(S_hG)^\perp$ is stable under $P_h$ up to a 
$O(\mathrm e^{-\frac \alpha h})$, meaning that there exists $\alpha>0$ such that for all $u\in S_hG$ and $v\in(S_hG)^\perp$,
$$P_hu=\widetilde O(\mathrm e^{-\frac{\alpha} h}\|u\|),$$
and there exists $v'\in L^2$ with $v'=\widetilde O(\mathrm e^{-\frac{\alpha} h}\|v\|)$ such that
$$P_hv-v'\in(S_hG)^\perp.$$
 Denoting by $\Pi$ the orthogonal projection on $S_hG$, the two previous relations and Pythagoras' theorem  allow us to write that for all $u\in L^2$
 $$\begin{array}{lll}
 \|(P_h-z)(\id-\Pi)u+(P_h-z)\Pi u\|^2&=&\|(P_h-z)(\id-\Pi)u\|^2\\
 &&+\|(P_h-z)\Pi u\|^2+O(\mathrm e^{-\frac{2\alpha} h})\|u\|^2.
 \end{array}$$
We then have for all $u\in L^2$ 
   $$\begin{array}{lll}\|(P_h-z)u\|^2&=&\|(P_h-z)(\id-\Pi)u+(P_h-z)\Pi u\|^2\\
      &=& \|(P_h-z)(\id-\Pi)u\|^2+\|(P_h-z)\Pi u\|^2\\&&+O(\mathrm e^{-\frac{2\alpha} h})\|u\|^2\\
     &\geq& \frac {c^2 h^2}2\|(1-\Pi)u\|^2+\|(P_h-z)\Pi u\|^2+O(\mathrm e^{-\frac{2\alpha} h})\|u\|^2
     ,\end{array}$$
where we used the estimate (\ref{res}) and choose $z$ smaller than $\frac c {\sqrt2}$.
Thanks to the triangle inequality and the fact that $P_h\Pi u=O(\mathrm e^{-\frac\alpha h})\|u\|$ by definition of $\Pi$, we get
 \begin{eqnarray*}\begin{array}{lll}\|(P_h-z)u\|^2&\geq&\frac {c^2 h^2}2\|(1-\Pi)u\|^2+|z|^2\|\Pi u\|^2+O(\mathrm e^{-\frac{2\alpha} h})\|u\|^2\\
 &\geq&\min(\frac {c^2 h^2}2,|z|^2)(\|(1-\Pi)u\|^2+|z|^2\|\Pi u\|^2)
     ,\end{array}
     \end{eqnarray*}
Therefore, for all $\delta_1<\frac c2$, if we take $\delta_1 h\leq|z|\leq \frac c2 h$ and $h$ small enough we get
$$\|(P_h-z)u\|\geq \frac{\delta_1} 2\|u\| $$
We conclude that there exists $\delta>0$ such that for all $0<\delta_1\leq\delta$, if 
$\delta_1 h\leq|z|\leq\delta h$, we have the resolvent estimate $$\|(P_h-z)^{-1}\|\leq \frac {C_{\delta_1}}h.$$
This completes the proof of part \textit{ii)} of Theorem \ref{main}.
\section{PT-symmetry}

Let $H$ be the generalised eigenspace associated with the eigenvalues of modulus smaller than $\delta h$, where $\delta$ is sufficiently small as in Theorem \ref{main}, 
that is the range of the spectral projection $\Pi_0=\frac 1 {2i\pi}\int_{|z|=\delta h}(z-P_h)^{-1}\,\mathrm dz$.
We can write the proposition
\begin{prop} 
We have $ \mathrm{dim}\, H=n_0$, where $n_0$ is the number of local minima of the potential $V$.
\end{prop}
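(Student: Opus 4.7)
My plan is to establish the two inequalities $\dim H\ge n_0$ and $\dim H\le n_0$ separately, relying on the quasimodes of Lemma~\ref{Quas}, the resolvent bound $\|(z-P_h)^{-1}\|\le C/h$ on $|z|=\delta h$ from Theorem~\ref{main}~(ii), and the hypocoercive lower bound of Corollary~\ref{Res}.

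For $\dim H\ge n_0$ I would show that the images $\Pi_0 g_j^h$ form an exponentially small perturbation of the almost orthonormal family $(g_j^h)_j$. Using the algebraic identity $(z-P_h)^{-1}=z^{-1}\id+z^{-1}(z-P_h)^{-1}P_h$ and integrating along $|z|=\delta h$, Cauchy's formula gives
\[
\Pi_0 g_j^h = g_j^h+\frac{1}{2i\pi}\oint_{|z|=\delta h}\frac{1}{z}(z-P_h)^{-1}P_h g_j^h\,\mathrm dz.
\]
Combined with $P_h g_j^h=\widetilde O(\mathrm e^{-\alpha/h})\|g_j^h\|$ and the resolvent bound, this yields $\Pi_0 g_j^h = g_j^h+\widetilde O(\mathrm e^{-\alpha/h})\|g_j^h\|$. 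The almost orthogonality of $(g_j^h)_j$ then forces $(\Pi_0 g_j^h)_j$ to be linearly independent for $h$ small enough, whence $\dim H\ge n_0$.

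For $\dim H\le n_0$ I would contrast the smallness of $P_h|_H$ with the coercivity of $P_h$ on $(S_hG)^\perp$. The Cauchy representation
\[
P_h\Pi_0=\frac{1}{2i\pi}\oint_{|z|=\delta h}z(z-P_h)^{-1}\,\mathrm dz
\]
together with the resolvent bound gives $\|P_h\varphi\|=O(\delta h)\|\varphi\|$ for every $\varphi\in H$, with a constant that can be made arbitrarily small by shrinking $\delta$. On the other hand, Corollary~\ref{Res} applied at $z=0$ gives $\|P_h v\|\ge ch\|v\|$ for $v\in\sss\cap (S_hG)^\perp$, and one can extend this by density to $D(P_h)\cap(S_hG)^\perp$ (note that $H\subset D(P_h^\infty)$ and that $S_hG$ is finite-dimensional, spanned by the Schwartz quasimodes $g_j^h$). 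Choosing $\delta$ so small that $O(\delta)<c$ forces $H\cap (S_hG)^\perp=\{0\}$, so the orthogonal projection $\Pi\colon H\to S_hG$ is injective and $\dim H\le\dim S_hG=n_0$.

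The main subtlety I foresee is this density step: one must justify that the Schwartz-class estimate of Corollary~\ref{Res} still applies to arbitrary $\varphi\in H\cap(S_hG)^\perp$. Concretely, one approximates $\varphi$ by $\phi_n\in\sss$ in the graph norm of $P_h$ and replaces them by $(\id-\Pi)\phi_n\in\sss\cap(S_hG)^\perp$; since $\Pi\phi_n\to 0$ in the finite-dimensional space $S_hG$ spanned by Schwartz functions, both $(\id-\Pi)\phi_n\to \varphi$ and $P_h(\id-\Pi)\phi_n\to P_h\varphi$ in $L^2$, and the pointwise inequality of Corollary~\ref{Res} passes to the limit. Once this is in place, the rest of the proof is just the two Cauchy integral estimates above combined with Lemma~\ref{Quas}.
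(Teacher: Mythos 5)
Your argument is correct and follows essentially the same route as the paper: the lower bound $\dim H\ge n_0$ via $\Pi_0 g_j^h=g_j^h+\widetilde O(\mathrm e^{-\alpha/h})\|g_j^h\|$ and the almost orthogonality of Lemma \ref{Quas} is exactly the paper's computation, and your upper bound is a fleshed-out version of the paper's one-line appeal to the coercive estimate of Proposition \ref{hyp} on the orthogonal complement of the $n_0$-dimensional space $S_hG$. The details you add for the upper bound --- the Cauchy-integral estimate $\|P_h\Pi_0\|=O(\delta h)$ against the lower bound $\|P_hv\|\ge ch\|v\|$ of Corollary \ref{Res}, and the core/density step extending that bound from $\sss\cap(S_hG)^\perp$ to $D(P_h)\cap(S_hG)^\perp$ --- are points the paper leaves implicit, and you handle them correctly.
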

\begin {dem}
From proposition \ref{hyp}, the coercive estimate on the orthogonal of a $n_0$ dimensional subspace
gives us that $\mathrm{dim}\, H\leq n_0$. Moreover we saw that $(z-P_h)g_j^h=zg_j^h+\widetilde O(\mathrm e^{-\frac\alpha h})$; for $z$ with $|z |=\delta h$,  we therefore get $(z-P_h)^{-1}g_j^h=\frac1zg_j^h+\widetilde O(\mathrm e^{-\frac\alpha h})$. 
By integrating along the circle centered in 0 and with radius $\delta h$, we finally get
 $$\Pi_0 g_j^h=g_j^h+\widetilde O(\mathrm e^{-\frac\alpha {2h}})$$
Since the $g_j^h$ are almost orthogonal, they are linearly independent. We deduce that $\mathrm{dim}\,H\geq n_0$, so $\mathrm{dim}\,H=n_0$.\end{dem}
It remains to show that $H$ contains no Jordan blocks for $P_h$ which we do by using an extra symmetry of our operator that is PT-symmetry: P referring to parity (that is velocity reversal) and T to time reversal.
Let $\kappa\,:\,\R^{2d}\rightarrow\R^{2d}$ be given by $\kappa(x,v)=(x,-v)$. We then put $U_\kappa u=u\circ\kappa,\,u\in L^2$, so that 
$U_\kappa$ is unitary and selfadjoint. We also introduce the non-definite Hermitian form $$(u,v)_\kappa=(U_\kappa u,v),\;u,\,v\in L^2$$
Note that $$P_h^*=U_\kappa^{-1}P_hU_\kappa,$$
so $P_h$ is formally selfadjoint with respect to the Hermitian form $(.,.)_\kappa$.
\begin{prop}For $h$ small enough, the restriction of $(.,.)_\kappa$ to $H\times H$ is positive definite uniformly with respect
 to $h$.
\end{prop}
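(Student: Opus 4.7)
The plan is to work in the basis $(e_j)_{j=1,\ldots,n_0}$ of $H$ introduced in the proof of the previous proposition, where $e_j=\Pi_0 g_j^h$, and to exploit the key observation that the quasimodes $g_j^h(x,v)=e_j(x)\mu_h^{1/2}(v)$ are \emph{even in} $v$, hence fixed by $U_\kappa$. This is the concrete manifestation of PT-symmetry at the level of the quasimodes. Combined with Lemma \ref{Quas}, it immediately gives
\[(g_j^h,g_k^h)_\kappa=(U_\kappa g_j^h,g_k^h)=(g_j^h,g_k^h)=\delta_{jk}\|g_j^h\|^2+O(\mathrm{e}^{-\frac{\alpha}{h}}\|g_j^h\|\,\|g_k^h\|),\]
so on the quasimode family, $(\cdot,\cdot)_\kappa$ coincides with the usual $L^2$ scalar product up to exponentially small errors.

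Next I transfer this to the $e_j$. Writing $e_j=g_j^h+r_j$ with $r_j=\widetilde O(\mathrm{e}^{-\frac{\alpha}{2h}})$ (which came out of the proof that $\dim H=n_0$) and expanding
\[(e_j,e_k)_\kappa=(g_j^h+U_\kappa r_j,\,g_k^h+r_k),\]
each cross and quadratic error is controlled by $\widetilde O(\mathrm{e}^{-\frac{\alpha}{2h}})\|g_k^h\|$ or smaller. A Laplace-type computation shows that $\|g_j^h\|$ is only polynomially small in $h$, so after dividing by $\|g_j^h\|\,\|g_k^h\|$ the normalized Gram matrix
\[M_{jk}=\frac{(e_j,e_k)_\kappa}{\|g_j^h\|\,\|g_k^h\|}\]
equals $\delta_{jk}+O(\mathrm{e}^{-\frac{\alpha'}{h}})$ for some $0<\alpha'<\alpha/2$, the polynomial prefactors in $\widetilde O$ being absorbed by slightly shrinking the exponent. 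In particular, $M$ is positive definite with smallest eigenvalue at least $1/2$ for $h$ small enough.

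The same expansion applied to the ordinary $L^2$ scalar product shows that the matrix $\bigl((e_j,e_k)/(\|g_j^h\|\|g_k^h\|)\bigr)_{j,k}$ is also $\id+O(\mathrm{e}^{-\frac{\alpha'}{h}})$, so the vectors $e_j/\|g_j^h\|$ form an almost orthonormal basis of $H$. For any $u=\sum_j\alpha_j\,e_j/\|g_j^h\|\in H$ we thus have $\|u\|^2=\sum_j|\alpha_j|^2(1+O(\mathrm{e}^{-\frac{\alpha'}{h}}))$ and $(u,u)_\kappa=\sum_j|\alpha_j|^2(1+O(\mathrm{e}^{-\frac{\alpha'}{h}}))$, which yields the uniform lower bound $(u,u)_\kappa\geq\tfrac14\|u\|^2$ for $h$ sufficiently small.

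The heart of the argument is the PT-invariance $U_\kappa g_j^h=g_j^h$; the one technical point is the bookkeeping needed to ensure that the exponentially small remainders $\widetilde O(\mathrm{e}^{-\frac{\alpha}{2h}})$ dominate the polynomial normalization factor $\|g_j^h\|$, which is routine since any negative power of $h$ is absorbed by $\mathrm{e}^{-\frac{\alpha'}{h}}$.
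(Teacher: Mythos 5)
Your proof is correct and follows essentially the same route as the paper: the key point in both is that $U_\kappa g_j^h=g_j^h$ (evenness in $v$) together with the almost orthogonality of Lemma \ref{Quas} and the relation $\Pi_0 g_j^h=g_j^h+\widetilde O(\mathrm e^{-\frac{\alpha}{2h}})$ produce a basis of $H$ whose Gram matrix for $(.,.)_\kappa$ is $\id+O(\mathrm e^{-\frac{\alpha'}{h}})$, whence the uniform lower bound. One small caveat: your justification of the normalization step via ``$\|g_j^h\|$ is only polynomially small'' is not correct in general, since $\|g_j^h\|^2\sim C h^{d/2}\mathrm e^{-V(x_j)/h}$ can be exponentially small when the $j$-th minimum does not sit at height $0$; the step is nevertheless harmless because the remainders in Lemma \ref{Quas} (and hence in $\Pi_0 g_j^h=g_j^h+\widetilde O(\mathrm e^{-\frac{\alpha}{2h}})$) are relative to $\|g_j^h\|$, which is the correct way to absorb the normalization.
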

\begin{dem} Let us recall that $g_j^h(x,v)=\chi_j(x)\mmm^{1/2}_h(x,v)$ and that $\Pi_0 g_j^h=g_j^h+\widetilde O(\mathrm e^{-\frac\alpha {2h}})$ . 
So $g_j^h\circ\kappa=g_j^h$ and the family $(g_j^h)_{1\leq j\leq n_0}$ is almost orthonormal for the Hermitian form $(.,.)_\kappa$ 
Then there exists a basis $(g_{0,j})_{1\leq j\leq n_0}$ of $H$  such that
$$(g_{0,j},g_{0,j})_\kappa=1+O(\mathrm e^{-\frac \alpha h}),\quad (g_{0,j},g_{0,j'})_\kappa=O(\mathrm e^{-\frac \alpha h})\,\mathrm{pour}\,j\neq j'.$$
Then for $H\ni u=\displaystyle{\sum_{k=1}^{n_0}u_kg_{0,k}}$, we get
              $$(u,u)_\kappa=\sum_{k=1}^{n_0}(1+O(\mathrm e^{-\frac \alpha h}))|u_k|^2\geq\|u\|^2/C.$$
This ends the proof. 
\end{dem}
Thanks to the formal selfadjointness of $P_h$ with respect to  $(.,.)_\kappa$, we get the following proposition.
\begin{prop} For $h$ small enough, the restriction of $P_h\,:\,H\rightarrow H$ is selfadjoint with respect to the scalar product $(.,.)_\kappa$ on $H$. Moreover $P_h$ has exactly $n_0$ real eigenvalues counted for multiplicity smaller than $\delta h $ and they are $O(\mathrm e^{-\frac\alpha h})$.
\end{prop}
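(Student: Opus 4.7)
The plan is to combine the $PT$-symmetry identity $P_h^* = U_\kappa^{-1} P_h U_\kappa$ with the positive definiteness of $(\cdot,\cdot)_\kappa$ on $H$ established in the previous proposition, and then to use the quasimode estimates of Lemma \ref{Quas} together with the resolvent bound of Corollary \ref{Res} to control the size of the eigenvalues.

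First I would note that $H$ is $P_h$-invariant, being the range of the spectral projector $\Pi_0$, which commutes with $P_h$. Then for $u, v \in H$ I would compute
$$(P_h u, v)_\kappa = (U_\kappa P_h u, v) = (P_h^* U_\kappa u, v) = (U_\kappa u, P_h v) = (u, P_h v)_\kappa,$$
which, combined with the fact that $(\cdot,\cdot)_\kappa$ is a genuine positive definite scalar product on $H$ for $h$ small, makes $P_h|_H$ selfadjoint on the finite-dimensional Hilbert space $(H,(\cdot,\cdot)_\kappa)$. Standard finite-dimensional spectral theory then provides a $(\cdot,\cdot)_\kappa$-orthogonal basis of eigenvectors with real eigenvalues and no Jordan blocks; combined with the dimension count $\dim H = n_0$ from the preceding proposition, this produces exactly $n_0$ real eigenvalues counted with multiplicity.

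It remains to establish the exponential smallness. From Corollary \ref{Res} we have $\|(P_h-z)^{-1}\| \le C/h$ on $|z|=\delta h$, so applying this to $\Pi_0 = \frac{1}{2i\pi}\int_{|z|=\delta h}(z-P_h)^{-1}\,dz$ yields $\|\Pi_0\| \le C\delta$ uniformly in $h$. Using Lemma \ref{Quas}, the projected quasimodes $\Pi_0 g_j^h = g_j^h + \widetilde O(\mathrm e^{-\alpha/(2h)})$ form an almost orthonormal basis of $H$, and thanks to the commutation $\Pi_0 P_h = P_h \Pi_0$ together with $P_h g_j^h = \widetilde O(\mathrm e^{-\alpha/h})$, each basis vector satisfies $P_h \Pi_0 g_j^h = \widetilde O(\mathrm e^{-\alpha/h})$. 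Expanding an arbitrary unit vector $u \in H$ in this basis with uniformly bounded coefficients gives $\|P_h u\| = \widetilde O(\mathrm e^{-\alpha/h})$, so every eigenvalue of $P_h|_H$ has modulus $O(\mathrm e^{-\alpha'/h})$ for any $\alpha' < \alpha$ (to absorb polynomial losses). The main obstacle is this last step: it is what justifies speaking of genuine exponentially small eigenvalues rather than merely of spectrum inside a disc of radius $\delta h$, and it rests on the correct interplay between the uniform resolvent bound (which controls $\Pi_0$) and the Witten-Laplacian-based quasimode construction.
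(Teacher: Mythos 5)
Your proof is correct and follows exactly the route the paper intends: the paper states this proposition without a written proof, merely invoking the formal $\kappa$-selfadjointness together with the two preceding propositions, and your argument (selfadjointness computation on the finite-dimensional space $(H,(\cdot,\cdot)_\kappa)$, diagonalizability with real eigenvalues, and exponential smallness via the bounded spectral projector applied to the quasimodes of Lemma \ref{Quas}) is precisely the fleshed-out version of that. No discrepancies to report.
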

This implies in particular point \textit{i)} of Theorem \ref{main}.
\section{Return to equilibrium}
To conclude, we translate our spectral estimates into estimates for the semigroup decay.
Let us recall that Corollary \ref{Res} gives us that for all $u\in\sss\cap(S_hG)^\perp$ and $\re z\leq ch$
         $$\|(P_h-z)u\|\geq c h  \|u\|.$$

Let us rewrite the previous resolvent estimate as an estimate on $(\id-\Pi_0)L^2$ for $\Pi_0$ the spectral projection associated with the $n_0$ smallest eigenvalues. We first note that $(g_j^h)_j$ are quasimodes 
for both $P_h$ and $P_h^*$. We then have for all $j$
$$\Pi_0g_j^h=\Pi_0^*g_j^h=g_j^h+\widetilde O(\mathrm e^{-\frac \alpha h}).$$
So if we take $u\in L^2$, we get for all $j$
$$(g_j^h,(\id-\Pi_0)u)=((\id-\Pi_0^*)g_j^h,(\id-\Pi_0)u)=O(\mathrm e^{-\frac \alpha h})\|(\id-\Pi_0)u\|$$
We conclude that one can write for all $u\in L^2$
$$(\id-\Pi_0)u=v+\displaystyle{\sum_j a_j g_j^h}$$
with $v\in G^\perp$, $\|v\|=\|(\id-\Pi_0)u\|+O(\mathrm e^{-\frac \alpha h})\|(\id-\Pi_0)u\|$ and for all $j$, 
$|a_j|=O(\mathrm e^{-\frac \alpha h})\|(\id-\Pi_0)u\|$.
We then have with $\re z\leq ch$, thanks to our previous resolvent estimate,
$$\|(P_h-z)(\id-\Pi_0)u\|\geq ch\|v\|-O(\mathrm e^{-\frac \alpha h})\|(\id-\Pi_0)u\|$$
We therefore get for $h$ small enough and a new $c$ the resolvent estimate on $(\id-\Pi_0)L^2$ with $\re z\leq ch$
          $$\|(P_{h|(\id-\Pi_0)L^2}-z)^{-1}\|\leq\frac 1{ch}.$$
Therefore we have obtained an uniform resolvent estimate on the left half-plane $\left\{\re z\leq ch\right\}$.
As $\Pi_0L^2$ and $(\id-\Pi_0)L^2$ are stable under $P_h$ (by definition of $\Pi_0$), the Gearhart-Prüss theorem 
(cf. \cite{Hel13}) for $P_{h\,|(\id-\Pi_0)L^2}$ allows us to write that for all $t>0$
 $$\mathrm e^{-tP_h}=\mathrm e^{-tP_h}\Pi_0+O(e^{-ch t}).$$ Despite the fact that the projection is not orthogonal,
since we have $\Pi_0=\frac 1 {2i\pi}\int_{\ccc(0,\delta h)}(z-P_h)^{-1}\,\mathrm dz$, our resolvent estimate gives us 
$\|\Pi_0\|=O(1)$. Now if we denote by $(\mu_j)_{j=1...n_0}$ the exponentially small eigenvalues (counted with multiplicity)
and $\Pi_j$ the associated spectral projections, we have that $\|\Pi_j\|=O(1)$.
Indeed, we saw that the restriction $P_{h\,|H}$ is selfadjoint with respect to $(.,.)_\kappa$,
which is equivalent (on $H$) to the ambient scalar product, so the projections are bounded.
Thus, we can sum up all of this into the following Proposition.
\begin{prop} There exists $\delta_1>0$ and $\alpha>0$ such that for all , $t\geq0$ and $h$ small enough
               $$\mathrm e^{-tP_h}=\sum_{j=1}^{n_0}\mathrm e^{-t\mu_j}\Pi_j+ O(e^{-\delta_1h t}),$$
with $\|\Pi_j\|=O(1)$ and $\mu_j=O(\mathrm e^{-\frac\alpha h})$.

\end{prop}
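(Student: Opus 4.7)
The plan is to assemble the spectral material derived earlier into a semigroup decomposition. Since both $H=\Pi_0 L^2$ and $(\id-\Pi_0)L^2$ are invariant under $P_h$, the semigroup splits as $\mathrm e^{-tP_h}=\mathrm e^{-tP_h}\Pi_0+\mathrm e^{-tP_h}(\id-\Pi_0)$, and I would treat the two pieces separately.

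For the finite-dimensional part, the previous proposition gives that $P_{h|H}$ is selfadjoint with respect to $(.,.)_\kappa$, which is positive definite on $H$ and uniformly equivalent to the ambient inner product there. Hence $P_{h|H}$ is diagonalisable with real eigenvalues $\mu_j=O(\mathrm e^{-\alpha/h})$, its spectral projections $\Pi_j$ are $(.,.)_\kappa$-orthogonal, and the uniform norm equivalence on $H$ transfers this to $\|\Pi_j\|=O(1)$ in the ambient norm. The functional calculus on this finite-dimensional diagonalisable piece then yields $\mathrm e^{-tP_h}\Pi_0=\sum_{j=1}^{n_0}\mathrm e^{-t\mu_j}\Pi_j$ exactly. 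The bound $\|\Pi_0\|=O(1)$ follows from $\Pi_0=\frac{1}{2i\pi}\int_{|z|=\delta h}(z-P_h)^{-1}\,\mathrm dz$ combined with part \textit{ii)} of Theorem \ref{main}.

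For the complement, I would apply the Gearhart--Pr\"uss theorem to $P_{h|(\id-\Pi_0)L^2}$. This operator generates a contraction $C_0$ semigroup, since $P_h$ is maximal accretive and the subspace is invariant; moreover the resolvent estimate $\|(P_{h|(\id-\Pi_0)L^2}-z)^{-1}\|\leq 1/(ch)$ established just above holds on the full half-plane $\{\re z\leq ch\}$, which is precisely the hypothesis Gearhart--Pr\"uss needs in order to conclude $\|\mathrm e^{-tP_h}(\id-\Pi_0)\|=O(\mathrm e^{-\delta_1 ht})$ for some $\delta_1>0$. Summing the two contributions then gives the claimed identity. The delicate point is ensuring that the resulting decay exponent genuinely scales like $h$: this is where it matters that the uniform bound $1/(ch)$ is valid all the way up to the line $\re z=ch$, so that the contour shift underlying Gearhart--Pr\"uss produces the $\mathrm e^{-\delta_1 ht}$ rate rather than something weaker, and that the constant $\delta_1$ can be chosen independent of $h$.
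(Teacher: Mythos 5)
Your proposal is correct and follows essentially the same route as the paper: split via $\Pi_0$, use the $(.,.)_\kappa$-selfadjointness and uniform equivalence of norms on $H$ to diagonalise the finite-dimensional part and bound the $\Pi_j$, and apply Gearhart--Pr\"uss to $P_{h|(\id-\Pi_0)L^2}$ with the resolvent bound $O(1/(ch))$ on $\{\re z\leq ch\}$ (the transfer of the estimate from $(S_hG)^\perp$ to $(\id-\Pi_0)L^2$, which you cite as already established, is indeed carried out in the paper immediately before the proposition). The point you flag about the decay exponent scaling like $h$ is treated at the same level of detail in the paper.
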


\noindent\textbf{Acknowledgments:} We want to deeply thank J. Viola for his useful remarks concerning both mathematical and grammatical questions in the first version of this paper.

\nocite{Vi02}
\bibliography{biblio}

\begin{thebibliography}{10}

\bibitem{DoMoSc09}
{\sc J.~Dolbeault, C.~Mouhot, and C.~Schmeiser}, {\em Hypocoercivity for
  kinetic equations with linear relaxation terms}, Comptes Rendus Mathematique,
  347 (2009), pp.~511--516.

\bibitem{DoMoSc10}
\leavevmode\vrule height 2pt depth -1.6pt width 23pt, {\em Hypocoercivity for
  linear kinetic equations conserving mass}, arXiv preprint arXiv:1005.1495,
  (2010).

\bibitem{Hel13}
{\sc B.~Helffer}, {\em Spectral Theory and its Applications}, vol.~139,
  Cambridge University Press, 2013.

\bibitem{HelKlNi04}
{\sc B.~Helffer, M.~Klein, and F.~Nier}, {\em {Quantitative analysis of
  metastability in reversible diffusion processes via a Witten complex
  approach}}, Matem{\'a}tica contempor{\^a}nea, 26 (2004), pp.~41--86.

\bibitem{HelNi05}
{\sc B.~Helffer and F.~Nier}, {\em Hypoelliptic estimates and spectral theory
  for Fokker-Planck operators and Witten Laplacians}, vol.~1862, Springer,
  2005.

\bibitem{HelSj84}
{\sc B.~Helffer and J.~Sj{\"o}strand}, {\em Multiple wells in the
  semi-classical limit {I}}, Communications in Partial Differential Equations,
  9 (1984), pp.~337--408.

\bibitem{He06}
{\sc F.~H{\'e}rau}, {\em Hypocoercivity and exponential time decay for the
  linear inhomogeneous relaxation boltzmann equation}, Asymptotic Analysis, 46
  (2006), pp.~349--359.

\bibitem{HeHiSj1}
{\sc F.~H{\'e}rau, M.~Hitrik, and J.~Sj{\"o}strand}, {\em Tunnel effect for
  kramers--fokker--planck type operators}, in Annales Henri Poincar{\'e},
  vol.~9, Springer, 2008, pp.~209--274.

\bibitem{HeHiSj2}
{\sc F.~H{\'e}rau, M.~Hitrik, and J.~Sj{\"o}strand}, {\em Tunnel effect for
  kramers--fokker--planck type operators: return to equilibrium and
  applications}, International Mathematics Research Notices, 2008 (2008),
  p.~rnn057.

\bibitem{HeHiSj3}
{\sc F.~H{\'e}rau, M.~Hitrik, and J.~Sj{\"o}strand}, {\em Tunnel effect and
  symmetries for kramers--fokker--planck type operators}, Journal of the
  Institute of Mathematics of Jussieu, 10 (2011), pp.~567--634.

\bibitem{HeNi04}
{\sc F.~H{\'e}rau and F.~Nier}, {\em Isotropic hypoellipticity and trend to
  equilibrium for the fokker-planck equation with a high-degree potential},
  Archive for Rational Mechanics and Analysis, 171 (2004), pp.~151--218.

\bibitem{HeSjSt05}
{\sc F.~H{\'e}rau, J.~Sj{\"o}strand, and C.~C. Stolk}, {\em Semiclassical
  analysis for the kramers--fokker--planck equation}, Communications in Partial
  Differential Equations, 30 (2005), pp.~689--760.

\bibitem{Vi02}
{\sc C.~Villani}, {\em A review of mathematical topics in collisional kinetic
  theory}, Handbook of mathematical fluid dynamics, 1 (2002), pp.~71--74.

\bibitem{Vi09}
\leavevmode\vrule height 2pt depth -1.6pt width 23pt, {\em Hypocoercivity},
  Memoirs of the American Mathematical Society, 202 (2009).

\end{thebibliography}
\bibliographystyle{siam}
\end{document}